\def\be{\begin{equation}}
    \def\ee{\end{equation}}
\def\ba{\begin{array}}
    \def\ea{\end{array}}
\def\qed{\leavevmode\unskip\penalty9999 \hbox{}\nobreak\hfill
    \quad\hbox{\leavevmode  \hbox to.77778em{%
            \hfil\vrule   \vbox to.675em%
            {\hrule width.6em\vfil\hrule}\vrule\hfil}}
    \par\vskip3pt}
\newtheorem{theorem}{Theorem}
\newtheorem{lemma}{Lemma}
\begin{document}
\title{\large\bf Moments based entanglement criteria and measures}

\author{Yiding Wang$^{1}$, Tinggui Zhang$^{1, \dag}$, Xiaofen Huang$^{1}$ and Shao-Ming Fei$^{2}$}
    \affiliation{ ${1}$ School of Mathematics and Statistics, Hainan Normal University, Haikou, 571158, China \\
        $2$ School of Mathematical Sciences, Capital Normal University, Beijing 100048, China \\
        $^{\dag}$ Correspondence to tinggui333@163.com}

\bigskip
\bigskip

\begin{abstract}
Quantum entanglement plays a key role in quantum computation and
quantum information processing. It is of great significance to find
efficient and experimentally friend separability criteria to detect
entanglement. In this paper, we firstly propose two easily used
entanglement criteria based on matrix moments. The first
entanglement criterion only uses the first two realignment moments
of a density matrix. The second entanglement criterion is based on
the moments related to the partially transposed matrix. By detailed
examples we illustrate the effectiveness of these criteria in
detecting entanglement. Moreover, we provide an experimentally measurable lower bound of
concurrence based on these moments. Finally, we present both bipartite and genuine
tripartite entanglement measures based on the moments of the reduced states. By detailed examples, we show that our entanglement measures characterize the quantum entanglement in a more fine ways than the existing measures.
\end{abstract}

\pacs{04.70.Dy, 03.65.Ud, 04.62.+v} \maketitle

\section{I. Introduction}
Quantum entanglement \cite{ab} is a novel characteristic of quantum mechanics and plays an important role in many quantum tasks such as
quantum communications \cite{cgcraw,rh,nr,j}, quantum simulation \cite{s}, quantum computing \cite{asc,ar,mc} and quantum cryptography \cite{a,ngwh,l}.
In this context, detecting the quantum entanglement has become particularly important.

Let $H_A$ and $H_B$ denote the Hilbert spaces of systems $A$ and $B$ with
dimensions $m$ and $n$, respectively. A quantum state $\rho\in H_A\otimes H_B$ is separable if it can be expressed as a convex combination of product states,
\begin{center}
$\rho=\sum_{i}p_i\rho_i^A \otimes \rho_i^B,\ \ \sum_{i}p_i=1,\ \ 0\leq p_i\leq1.$
\end{center}
Otherwise, the state $\rho$ is entangled. Generally it is a
challenge to detect the entanglement for a given state. The PPT
criterion \cite{ap} is both necessary and sufficient for the
separability of quantum states in systems $2\otimes 2$ and $2\otimes
3$ \cite{abn}. This criterion indicates that for any bipartite
separable state $\rho$, the matrix $\rho^{\tau}$ obtained from
partial transpose with respect to subsystem $B$ is still positive
semi-definite, where $(\rho^{\tau})_{ij,kl}=(\rho)_{il,kj}$. Any
state that violates the PPT criterion is an entangled one. The
realignment is another permutation of the elements of a density
matrix. The realignment criterion \cite{cw,oru} says that for any
bipartite separable state $\rho$, the trace norm of the realigned
matrix $\rho^R$ is not greater than 1, i.e., $\|\rho^R\|\leq1$,
where $(\rho^R)_{ij,kl}=(\rho)_{ik,jl}$, and the trace norm of an
operator $E$ is defined by $\|E\|:=Tr(\sqrt{E^{\dagger}E})$. A state
is entangled if it violates the realignment criterion.

There are also many other approaches to detect the entanglement. The
entanglement witnesses can be used to detect the entanglement
\cite{b,rphk,oggt} experimentally, although the construction of the
witness operators generally requires the prior deterministic
information of the quantum state. Locally randomized
measurements \cite{snao,ano,abcp,tap,ljw} and parameterized
entanglement monotone \cite{xmys,zwms,zwsm,jfjq,htfy} have been also
adopted to detect entanglement. Besides, the quantum entanglement is also studied based on the truncated moment problem that is well studied mathematically. Bohnet et al. proposed a necessary and sufficient condition of separability that can be applied by using a hierarchy of semi-definite programs \cite{bbg}.

Recently, the authors in \cite{arhr,yso} show that the first three
partially transpose (PT) moments can be used to detect entanglement.
The advantage of the PT moments is that they can be experimentally
measured through global random unitary matrices \cite{zzl,jlas} or
local randomized measurements \cite{arhr} based on quantum shadow
estimation \cite{hrj}. In \cite{yso} the authors proposed a
separability criterion based on PT moments called $p_3$-OPPT
criterion. Neven et al. proposed an ordered set of experimentally measured conditions for detecting entanglement \cite{ncvk}, with the $k$-th condition given by comparing the moments of the PT density operator up to order $k$. Zhang et al. introduced a separability criterion based on the rearrangement moments \cite{zjf}. In \cite{kzw} the authors
introduced $\Lambda$-moments with respect to any positive maps
$\Lambda$. They showed that these $\Lambda$-moments can effectively
characterize the entanglement of unknown quantum states without
prior reconstructions. In \cite{zyhp}, the authors proposed a
framework for designing multipartite entanglement criteria based on
permutation moments. The author in \cite{ali} demonstrates that for two-qubit quantum systems the PT moments can be expressed as functions of principal minors and shows that the PT moments can detect all the negative partial transpose entanglement of GHZ and W states mixed with white noise. A separability criterion and its physical realization has been also proposed
by using the moments of the realigned density matrices \cite{saak,ssa}.

Besides the separability, the quantification of entanglement is also of great significance \cite{ngmm}. Some entanglement measures have been
proposed to quantify the entanglement \cite{rphk,oggt,wkw,vw,sk,fma},
among which one of the most well known measures is the concurrence
\cite{rphk,oggt,wkw}. Let $|\psi_{AB}\rangle$ be a bipartite pure
state in $H_A\otimes H_B$. The concurrence of $|\psi_{AB}\rangle$ is given by
\begin{equation}
C(|\psi_{AB}\rangle)=\sqrt{2[1-Tr(\rho^2_A)]},
\end{equation}
where $\rho_A=Tr_B(|\psi_{AB}\rangle\langle\psi_{AB}|)$ is the reduced density matrix. The concurrence for general bipartite mixed states $\rho$ is given by the convex-roof extension,
\begin{equation}
C(\rho)=\mathop{\min}_{\{p_i,|\psi_i\rangle\}}\sum_{i}p_iC(|\psi_i\rangle),
\end{equation}
where $p_i\geq 0$, $\sum_{i}p_i=1$ and the minimum is taking over all possible pure state decompositions of $\rho=\sum_{i}p_i|\psi_i\rangle_{AB}\langle\psi_i|$.

For multipartite systems, the quantification of the genuine multipartite entanglement remains a challenging problem. The authors in \cite{mcc}
proposed a genuine multipartite entanglement measure (GMEM)
based on the concurrences under bi-partitions. The authors in \cite{xe} introduced a genuine
three-qubit entanglement measure in terms of the
area of a triangle with the three edges given by bipartite
concurrences. More genuine multipartite
entanglement measures have been also presented \cite{eb,srpr,ss,hhe}. In \cite{srpr} the
authors proposed the generalized geometric measure. Further genuine multipartite
concurrences are studied in \cite{hhe}. Guo et al. \cite{gy} gave an approach of constituting
genuine $m$-partite entanglement measures from any bipartite
entanglement and any $k$-partite entanglement measure for $3\leq
k<m$. Recently, the authors in \cite{jzyy} constructed a proper
genuine multipartite entanglement measure by using the geometric
mean area of concurrence triangles according
to a series of inequalities related to entanglement distribution.

In this paper, we first propose two separability criteria based on moments, and illustrate their effectiveness in entanglement detection by specific examples. We then provide an experimentally measurable lower bound of concurrence based on the moments. We present a bipartite entanglement measure based on the moments of the reduced states. Furthermore, we propose a genuine tripartite entanglement measure based on our bipartite entanglement measure.
The paper is organized as follows. In the second section, we provide a separability criterion based on realignment moments. In the third section, we propose a separability criterion based on PT moments. In the fourth section, we derive an experimentally measurable lower bound of concurrence for arbitrary bipartite states. In the fifth section, we propose a bipartite entanglement measure based on reduced moments. In the sixth section, we put forward a genuine tripartite entanglement measure based on our bipartite entanglement measure.
We summarize and discuss our conclusions in the last section.

\section{II. Separability criterion based on realignment moments}

We first recall the realignment moments of density matrices. Let $\rho$ be a bipartite state in $H_A\otimes H_B$. The realignment moments are given by
\begin{center}
$T^R_k=Tr[(\rho^{R\dagger}\rho^R)^k],~~~k=1,2,...,mn.$
\end{center}
Let $\sigma_1,\sigma_2,...\sigma_d$ be the $d$ nonzero singular values of $\rho^R$. We have
\begin{align}
T^R_1 &= Tr(\rho^{R\dagger}\rho^R)=\sum_{i}^{d}\sigma^2_i,\\
T^R_2 &= Tr[(\rho^{R\dagger}\rho^R)^2]=\sum_{i}^{d}\sigma^4_i.
\end{align}
We have the following conclusion on the separability of $\rho$ in terms of the realignment moments.

\begin{theorem}
If a state $\rho$ is separable, then $Q\leq 1$, where $Q\equiv \sqrt{\sqrt{2[(T^R_1)^2-T^R_2]}+T^R_1}$.
\end{theorem}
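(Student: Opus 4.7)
The plan is to reduce the statement to the standard realignment criterion $\|\rho^R\|\leq 1$ by showing that the quantity $Q$ is bounded above by the trace norm $\sum_i\sigma_i$ of $\rho^R$. Since the criterion gives $\sum_i\sigma_i\leq 1$ for separable states, this immediately yields $Q\leq 1$.

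First I would rewrite everything in terms of the nonzero singular values $\sigma_1,\ldots,\sigma_d$ of $\rho^R$. Using $T^R_1=\sum_i\sigma_i^2$ and $T^R_2=\sum_i\sigma_i^4$, a direct expansion gives
\begin{equation}
(T^R_1)^2-T^R_2=\Bigl(\sum_i\sigma_i^2\Bigr)^2-\sum_i\sigma_i^4=2\sum_{i<j}\sigma_i^2\sigma_j^2,
\end{equation}
so that $\sqrt{2[(T^R_1)^2-T^R_2]}=2\sqrt{\sum_{i<j}\sigma_i^2\sigma_j^2}$. Therefore
\begin{equation}
Q^2=T^R_1+2\sqrt{\sum_{i<j}\sigma_i^2\sigma_j^2}.
\end{equation}

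The key inequality is the elementary fact that for nonnegative reals $a_k\geq 0$ one has $\bigl(\sum_k a_k\bigr)^2\geq\sum_k a_k^2$. Applying this with $a_k=\sigma_i\sigma_j$ indexed by pairs $i<j$ gives
\begin{equation}
\sum_{i<j}\sigma_i\sigma_j\ \geq\ \sqrt{\sum_{i<j}\sigma_i^2\sigma_j^2}.
\end{equation}
Combining with the expansion $\bigl(\sum_i\sigma_i\bigr)^2=\sum_i\sigma_i^2+2\sum_{i<j}\sigma_i\sigma_j$ then yields
\begin{equation}
\Bigl(\sum_i\sigma_i\Bigr)^2\ \geq\ T^R_1+2\sqrt{\sum_{i<j}\sigma_i^2\sigma_j^2}=Q^2,
\end{equation}
i.e.\ $Q\leq\sum_i\sigma_i=\|\rho^R\|$.

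Finally I would invoke the realignment criterion recalled in the introduction: for any separable $\rho$, $\|\rho^R\|\leq 1$. Hence $Q\leq 1$, which is the claim. The only nontrivial step is identifying the right elementary inequality $(\sum a_k)^2\geq\sum a_k^2$ that converts $\sqrt{\sum\sigma_i^2\sigma_j^2}$ into the dominant cross-term $\sum\sigma_i\sigma_j$; everything else is algebraic rewriting and a direct appeal to the realignment bound, so I do not expect any serious obstacle.
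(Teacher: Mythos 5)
Your proof is correct and follows essentially the same route as the paper: both reduce $Q\leq\|\rho^R\|$ to the elementary inequality $\bigl(\sum_k a_k\bigr)^2\geq\sum_k a_k^2$ applied to the pair products $\sigma_i\sigma_j$, and then invoke the realignment criterion $\|\rho^R\|\leq 1$. Your forward arrangement (bounding $Q^2$ directly by $\bigl(\sum_i\sigma_i\bigr)^2$) is a minor streamlining of the paper's version, which instead bounds $T^R_2$ from below and then takes a square root.
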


\begin{proof} By the definition we have
		\begin{eqnarray}\label{key1}
			T^R_2 &=& (T^R_1)^2-2\sum_{i<j}\sigma^2_i\sigma^2_j\nonumber\\&\geq& (T^R_1)^2-2(\sum_{i<j}\sigma_i\sigma_j)^2\nonumber\\&=& (T^R_1)^2-\frac{1}{2}(2\sum_{i<j}\sigma_i\sigma_j)^2\nonumber\\&=& (T^R_1)^2-\frac{1}{2}[(\sum_{i=1}^{d}\sigma_i)^2-\sum_{i=1}^{d}\sigma_i^2]^2\nonumber\\&=& (T^R_1)^2-\frac{1}{2}(\|\rho^R\|^2-T^R_1)^2,
	\end{eqnarray}
where the inequality is due to the following fact: for non negative
real numbers $x_1,x_2,...,x_n$, $\sum_{i=1}^{n}x^2_i\leq
(\sum_{i=1}^{n}x_i)^2$. The relation Eq. (\ref{key1}) implies that
    \begin{align*}
        & (\|\rho^R\|^2-T^R_1)^2 \geq 2[(T^R_1)^2-T^R_2]\nonumber
        \\ & \Leftrightarrow \|\rho^R\|^2-T^R_1 \geq \sqrt{2[(T^R_1)^2-T^R_2]}.
        \end{align*}
Therefore, we have
$$\|\rho^R\|\geq \sqrt{\sqrt{2[(T^R_1)^2-T^R_2]}+T^R_1}.$$
According to the realignment criterion, if a quantum state $\rho$ is separable, $\|\rho^R\|\leq 1$, which completes the proof.
\end{proof}

From Theorem 1 a quantum state which violates the inequality $Q\leq 1$ must be entangled.
The advantage of our criterion is its simplicity as it only involves the first two moments of the realigned matrix. To verify the efficiency of our criterion let us consider the following example given in \cite{ga}.

Example 1.
$$
\rho_a=\left(\begin{array}{ccccccccc}
    \frac{1-a}{2} &  0 & 0 & 0 &  0 & 0 & 0 & 0 & -\frac{11}{50} \\
    0 & 0 & 0 & 0 &  0 & 0 & 0 & 0 & 0 \\
    0 & 0 & 0 & 0 &  0 & 0 & 0 & 0 & 0\\
    0 & 0 & 0 & 0 &  0 & 0 & 0 & 0 & 0 \\
    0 & 0 & 0 & 0 & \frac{1}{2}-a & -\frac{11}{50} & 0 & 0 & 0 \\
    0 & 0 & 0 & 0 & -\frac{11}{50} & a & 0 & 0 & 0 \\
    0 & 0 & 0 & 0 & 0 & 0 & 0 & 0 & 0 \\
    0 & 0 & 0 & 0 & 0 & 0 & 0 & 0 & 0 \\
    -\frac{11}{50} & 0 & 0 & 0 & 0 & 0 & 0 & 0 & \frac{a}{2}
\end{array}\right),
$$
where $\frac{1}{50}(25-\sqrt{141}) \leq a \leq
\frac{1}{100}(25+\sqrt{141})$. The first two
realignment moments of $\rho_a$ are
\begin{align*}
T^R_1 &= \frac{7a^2}{4}-a+\frac{867}{1250}, \\
T^R_2 &= \frac{35a^4}{16}-\frac{3a^3}{2}+\frac{373a^2}{250}-\frac{373a}{625}+\frac{292899}{1562500}.
\end{align*}
We obtain that when $\frac{1}{50}(25-\sqrt{141}) \leq a \leq
\frac{1}{100}(25+\sqrt{141})$, the inequality in Theorem 1 is
violated. That is, our criterion can detect all the entanglement for
this family of states. See Figure. 1.
\begin{figure}[htbp]
	\centering
	\includegraphics[width=0.5\textwidth]{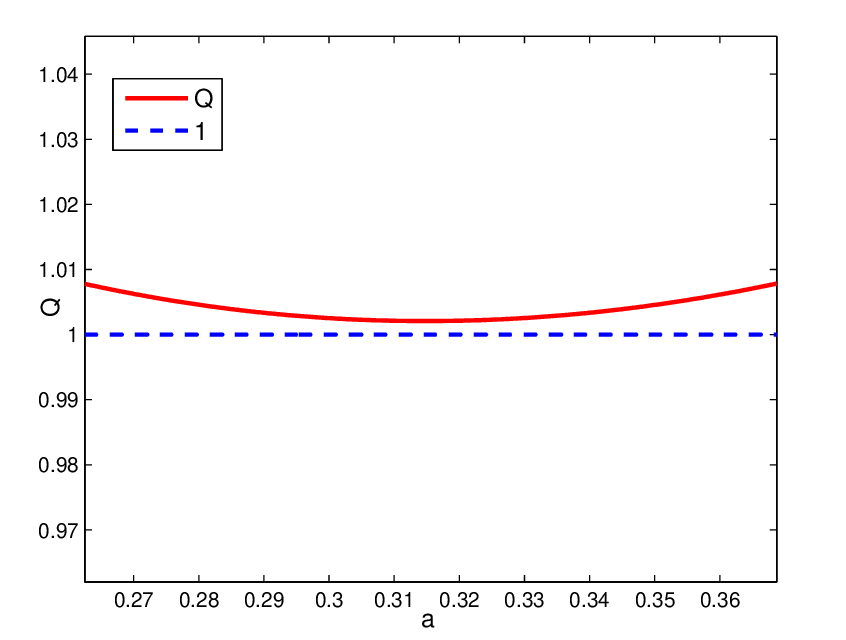}
	\vspace{-2.0em} \caption{The red solid line represents the value
		of $Q$. When $\frac{1}{50}(25-\sqrt{141}) \leq a \leq
		\frac{1}{100}(25+\sqrt{141})$, there is always $Q>1$, which means that this family of quantum states violates Theorem 1.} \label{Fig.1}
\end{figure}

In the above example, our entanglement criterion and realignment criterion are equally effective, as they both detect all entangled states in this family of quantum states. However, this is not always the case. In general, our criterion is weaker than the realignment criterion because our criterion is derived from the latter.

Example 2. Let us consider the Werner state,
$$
\rho_u=u|\psi\rangle\langle\psi|+\frac{1-u}{4}I_4,
$$
where $0\leq u\leq1$, $|\psi\rangle=\frac{1}{\sqrt{2}}(|00\rangle+|11\rangle)$ and $I_4$ is the $4\times4$ identity matrix. By calculation it can be concluded that $Q>1$ when $u>0.54$, which means that entanglement of $\rho_u$ can be detected by our entanglement criterion within the range of $0.54<u\leq1$. However, according to the realignment criterion $\rho_u$ is entangled when $u>\frac{1}{3}$. This also indicates that in order to achieve the experimental feasibility, our criterion is weaker than the original realignment criterion.

\section{III. Separability criterion based on PT moments}
With respect to the partially transposed matrix $\rho^{\tau}$ of $\rho$, the PT moments are defined as
\begin{center}
$T^\tau_k=Tr[(\rho^\tau)^k],~~~k=1,2,...,mn.$
\end{center}
Consider the characteristic polynomial of $\rho^\tau$,
\begin{center}
$a_0\lambda^p-a_1\lambda^{p-1}+...+(-1)^{p-1}a_{p-1}\lambda+(-1)^{p}a_p$,
\end{center}
where $p=mn$ is the number of rows of the matrix $\rho^\tau$,
$a_0=1$, $a_k=\sum_{\{s_k\}\in S}\prod_{j\in s_k}\lambda_j$,
$k=1,2,...,p,$ ${s_k}$ denotes a subset of $S=\{1,2,...,p\}$
with $k$ elements. The characteristic polynomial coefficients and
the PT moments have the following relations \cite{jfjq},
\begin{equation}\label{ak}
a_{k+1}=\frac1{k+1}\sum_{l=0}^{k}(-1)^la_{k-l}T^\tau_{l+1}
\end{equation}
for $k=0,1,...,p-1$. We have the following result.

\begin{theorem}
If the bipartite state $\rho$ is separable, then
\begin{equation}\label{t2}
a_k\,a_{k+1}>0,~~~k=0,1,...,q-1,
\end{equation}
where $q$ is the rank of the matrix $\rho^\tau$, $a_k$ is given in
Eq.(\ref{ak}), with $a_q\neq 0$ and $a_{r}=0$ for $r>q$.
\end{theorem}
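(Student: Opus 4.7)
The plan is to combine the PPT criterion with the standard identification of the coefficients $a_k$ as elementary symmetric polynomials in the eigenvalues of $\rho^\tau$. Since $\rho$ is assumed separable, the PPT criterion implies that $\rho^\tau$ is positive semidefinite, so every eigenvalue $\lambda_j$ of $\rho^\tau$ is non-negative. The rank hypothesis then says that exactly $q$ of these eigenvalues are strictly positive, while the remaining $p-q$ are zero.

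Next, I would read off the sign of each $a_k$ directly from the definition $a_k=\sum_{\{s_k\}\in S}\prod_{j\in s_k}\lambda_j$, which is the $k$-th elementary symmetric polynomial in $\lambda_1,\ldots,\lambda_p$. For $1\leq k\leq q$, at least one summand is a product of $k$ strictly positive eigenvalues, so $a_k>0$. Combined with $a_0=1>0$, this yields $a_k>0$ for every $0\leq k\leq q$. For $k>q$, every $k$-element subset of $\{1,\ldots,p\}$ must include at least one index corresponding to a zero eigenvalue, hence $a_k=0$, which is consistent with the statement $a_q\neq 0$ and $a_r=0$ for $r>q$.

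Multiplying consecutive positive coefficients then immediately gives $a_k\,a_{k+1}>0$ for $0\leq k\leq q-1$, which is exactly the conclusion of the theorem. Note that the recursion (\ref{ak}) does not enter the logical proof; it merely serves as the bridge that allows the $a_k$ to be computed from the experimentally accessible PT moments $T^\tau_k$, turning the eigenvalue-based sign condition into a genuinely moment-based criterion suitable for experiments.

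The main obstacle is largely bookkeeping rather than a substantive difficulty: one must carefully leverage the rank hypothesis to upgrade the trivial bound $a_k\geq 0$ to the strict inequality $a_k>0$ throughout the range $0\leq k\leq q$, since only the strict version guarantees $a_k\,a_{k+1}>0$ rather than merely $\geq 0$. Once the sign pattern of the elementary symmetric polynomials of a list of non-negative numbers with exactly $q$ positive entries is verified, the theorem follows at once.
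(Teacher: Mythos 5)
Your proof is correct and follows essentially the same route as the paper: invoke the PPT criterion to get that $\rho^\tau$ is positive semidefinite, then observe that the $a_k$ are the elementary symmetric polynomials of the non-negative eigenvalues, which are strictly positive for $k\leq q$ (and vanish for $k>q$) precisely because the rank hypothesis supplies $q$ strictly positive eigenvalues. The only difference is cosmetic: the paper packages this as an ``if and only if'' with the PPT condition, proving in addition the converse direction (alternating coefficient signs force non-negative roots) so as to conclude that the criterion is equivalent to PPT, whereas you prove only the forward implication --- which is all the stated theorem requires --- by reading the sign of each $a_k$ directly off its definition rather than via the paper's inductive argument on the factored characteristic polynomial.
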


\begin{proof}
The characteristic polynomial of $\rho^{\tau}$ can be rewritten as
	$P(\lambda)=a_0\lambda^p-a_1\lambda^{p-1}+...+(-1)^{q}a_{q}\lambda^{p-q}.$
	We first prove that $\rho^{\tau}$ is positive semidefinite if and
	only if $a_ka_{k+1}>0$ for each $k=0,...,q-1$. If $a_ka_{k+1}>0$ for
	each $k=0,...,q-1$, since $a_0=1$ the symbols of the coefficients of
	the characteristic polynomial are strictly alternating. Thus
	$P(\lambda)$ has no negative roots. Otherwise, if we assume the existence of negative roots, we obtain contradictions. Hence $\rho^{\tau}$ has only
	nonnegative eigenvalues.

Conversely, if $\rho^{\tau}$ is positive semidefinite, we denote its
positive eigenvalues by $\lambda_1,...,\lambda_q$, with all the
remaining $p-q$ eigenvalues being $0$. Through inductive argument,
we obtain that the signs of the coefficients of the polynomials
$(\lambda-\lambda_1)(\lambda-\lambda_2)...(\lambda-\lambda_q)$
alternate strictly, which gives $P(\lambda)$ up to a factor
$\lambda^{p-q}$. Therefore, $\rho^{\tau}$ is positive semidefinite
if and only if $a_ka_{k+1}>0$ for each $k=0,...,q-1$. From the PPT
criterion that $\rho^{\tau}$ is positive semidefinite for any
bipartite separable state $\rho$, we complete the proof of Theorem
2.
\end{proof}

Theorem 2 implies that if a bipartite quantum state violates any
inequality in Eq.(\ref{t2}), it must be entangled. From the proof of
Theorem 2, it is seen that our criterion is equivalent to the PPT
criterion. However, the PPT criterion can not be applied without
state tomography. Our criterion can be used to detect the
entanglement of unknown quantum states. We only need to measure the
PT moments, since the conditions $a_ka_{k+1}>0$, $k=0,1,...,m-1$, in
the Theorem 2 can be transformed into the relationship among the
moments. We illustrate the usefulness of our criterion through the
following example.

Example 3. Consider the two-qubit isotropic state given in \cite{zlfw},
$$
\rho_b=\frac{1-b}{3}I_2\otimes I_2+\frac{4b-1}{3}|\psi\rangle\langle\psi|,~~0\leq b\leq 1,
$$
where $I_2$ denotes the second-order identity matrix, $|\psi\rangle=\frac{1}{\sqrt{2}}(|00\rangle +|11\rangle)$. We have
\begin{align*}
T^{\tau}_1 &= 1,\\T^{\tau}_2 &= \frac{1}{3}(4b^2-2b+1),\\T^{\tau}_3
&=
-\frac{8}{9}b^3+\frac{5}{3}b^2-\frac{2}{3}b+\frac{5}{36},\\T^{\tau}_4
&=\frac{84}{81}b^4-\frac{156}{81}b^3+\frac{126}{81}b^2-\frac{39}{81}b+\frac{21}{324}.
\end{align*}
Substituting the above moments into the inequalities in Theorem 2,
we obtain that $\rho_b$ is entangled when $b>0.5$, which is exactly
the same result as the one from the realignment and PPT criterion
directly, and stronger than the result $b \geq 0.608594$ given in
\cite{ssa}.

\section{IV. Experimentally measurable lower bound of concurrence}

For any $m\otimes n$ $(m\leq n)$ quantum state $\rho$, Chen et al.
proposed a lower bound of concurrence \cite{csf},
\begin{equation}\label{lc}
C(\rho)\geq \sqrt{\frac{2}{m(m-1)}}\max(\|\rho^{\tau}\|-1,\|\rho^R\|-1).
\end{equation}
To obtain experimentally measurable lower bound of concurrence,
we next derive the lower bounds according to the moments from $\|\rho^{\tau}\|$ and $\|\rho^R\|$.

\begin{theorem}
For any $m\otimes n(m\leq n)$ quantum state $\rho$, we have the following experimentally measurable lower bound of concurrence,
\begin{equation}\label{2c}
C(\rho)\geq \sqrt{\frac{2}{m(m-1)}} \max\{M_1,M_2,0\},
\end{equation} where
\begin{align*}
    & M_1=\sqrt{\sqrt{2[(T_1)^2-T_2]}+T_1}-1,\\
    & M_2=\sqrt{\sqrt{2[(T^R_1)^2-T^R_2]}+T^R_1}-1
\end{align*}
with $T_i=Tr[(\rho^{\tau \dagger}\rho^{\tau })^i]$ and $T^R_i=Tr[(\rho^{R\dagger}\rho^R)^i]$, $i=1,2$.
\end{theorem}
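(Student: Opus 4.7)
The plan is to reduce the claim to two separate lower bounds on $\|\rho^{\tau}\|$ and $\|\rho^{R}\|$ expressed in terms of the first two moments, and then plug these bounds into the Chen--Albeverio--Fei inequality (\ref{lc}) already cited just before the statement. Since (\ref{lc}) is the only nontrivial ingredient relating the trace norms of $\rho^\tau$ and $\rho^R$ to $C(\rho)$, the whole task is to estimate these trace norms from below by quantities involving only $T_1,T_2$ and $T^R_1,T^R_2$ respectively.

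First I would handle the realignment part. The bound $\|\rho^{R}\|\geq\sqrt{\sqrt{2[(T^R_1)^2-T^R_2]}+T^R_1}$ is exactly what the proof of Theorem~1 establishes along the way (it is the last displayed inequality before the realignment criterion is invoked). So this half of the argument is inherited verbatim; no new computation is needed, and it immediately yields $\|\rho^{R}\|-1\geq M_2$.

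Next I would obtain the analogous bound $\|\rho^{\tau}\|\geq\sqrt{\sqrt{2[(T_1)^2-T_2]}+T_1}$. The argument is formally identical to the one used for $\rho^{R}$: let $\mu_1,\dots,\mu_{d'}$ denote the nonzero singular values of $\rho^{\tau}$, so that $T_1=\sum_i\mu_i^2$, $T_2=\sum_i\mu_i^4$, and $\|\rho^{\tau}\|=\sum_i\mu_i$. Applying the chain of inequalities of Eq.~(\ref{key1}) with $\sigma_i$ replaced by $\mu_i$ (which only uses nonnegativity of the $\mu_i$ and the elementary fact $\sum_i x_i^2\le(\sum_i x_i)^2$) yields $(\|\rho^{\tau}\|^2-T_1)^2\ge 2[(T_1)^2-T_2]$, and rearrangement gives the claimed lower bound for $\|\rho^{\tau}\|$. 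Hence $\|\rho^{\tau}\|-1\ge M_1$.

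Finally I would assemble the pieces. Feeding these two inequalities into (\ref{lc}) gives $C(\rho)\ge\sqrt{\tfrac{2}{m(m-1)}}\max(M_1,M_2)$, and since $C(\rho)\ge 0$ trivially, the $0$ can be inserted inside the maximum without loss, producing (\ref{2c}). I do not anticipate any real obstacle: the structural content of the theorem is already packaged in Theorem~1 and in the Chen et al.\ bound, so the only thing to verify carefully is that the algebraic manipulation in (\ref{key1}) depends only on the singular values being nonnegative reals, which is manifest for both $\rho^{R}$ and $\rho^{\tau}$. The mildest subtlety worth flagging is that $\rho^{\tau}$ is Hermitian, so its singular values are $|\lambda_i(\rho^{\tau})|$; this is harmless because $T_1$ and $T_2$ as defined use $\rho^{\tau\dagger}\rho^{\tau}$, so they really are the power sums of the singular values, and the derivation goes through unchanged.
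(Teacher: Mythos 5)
Your proposal is correct and follows essentially the same route as the paper: extend the moment-based lower bound of Theorem~1 from $\|\rho^{R}\|$ to $\|\rho^{\tau}\|$ (the paper likewise says ``similar to the proof of Theorem 1'') and substitute both bounds into the Chen--Albeverio--Fei inequality (\ref{lc}). The only cosmetic difference is that the paper spends most of its proof verifying the consistency check that $M_1\le 0$ for separable states, while you justify the $0$ in the maximum directly from $C(\rho)\ge 0$; both are adequate.
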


\begin{proof}
Firstly, we have proven that if $\rho$ is a separable state, then
$M_1\leq 0$. Similar to the proof of Theorem 1, we have
$\|\rho^{\tau}\|\geq \sqrt{\sqrt{2[(T_1)^2-T_2]}+T_1}$. Hence we
only need to prove that $\|\rho^{\tau}\| \leq 1$. Since $\rho$ is
separable, the eigenvalues $\xi_i$ of $\rho^{\tau}$ are non negative
and the summation of the eigenvalues is $1$, $\sum_{i} \xi_i=1$,
$\xi_i\geq 0$, $i=1,2,...,mn$. Hence the eigenvalues of
$\rho^{\tau\dagger}\rho^{\tau}$ are $\xi^2_i$ $(i=1,2,...,mn)$. As
the singular values of $\rho^{\tau}$ are the arithmetic square root
of the non negative eigenvalues of $\rho^{\tau\dagger}\rho^{\tau}$,
we have $\|\rho^{\tau}\|=\sum_{i} \xi_i=1$. From the definition of
concurrence and the formula (\ref{lc}), we obtain Eq.(\ref{2c}).
\end{proof}

Example 4. Consider the following $3\times 3$ dimensional quantum states
\begin{equation}
\rho_s=\frac{1-s}{9}I_9+s|\psi_3\rangle\langle\psi_3|,~~ s\in[0,1],
\nonumber
\end{equation}
where $|\psi_3\rangle=\frac{1}{\sqrt{3}}\sum_{i=1}^{3}|ii\rangle$.
The state is shown to entangled for $s>\frac{1}{4}$ \cite{mhph}.
From Theorem 3 we obtain the experimental measurable lower bound,
which detects most of the entangled states in this family, see
Figure. 2.
\begin{figure}[htbp]
\centering
\includegraphics[width=0.5\textwidth,height=0.36\textheight]{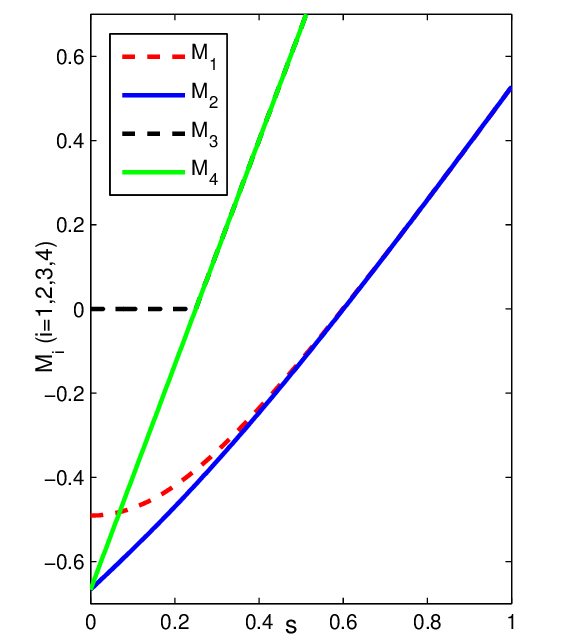}
\vspace{-1.7em} \caption{The red dashed line represents the value
		of $M_1$, the blue solid line denotes the value of $M_2$, the value of $\|\rho^{\tau}\|-1$ is represented by a black dashed line $M_3$, and the value of $\|\rho^R\|-1$ is represented by a green solid line $M_4$. For
		$s>0.5994$ the maximum values of $M_1$ and $M_2$ are greater than
		0.} \label{Fig.2}
\end{figure}

\section{V. Entanglement measure based on moments of reduced states}

Let $|\psi_{AB}\rangle=\sum_{i=1}^{d}\sqrt{\mu_i}|ii\rangle$ be a bipartite pure state in $H_A\otimes H_B$ in Schmidt decomposition, where $\sum_{i=1}^{d}\mu_i=1$, $\mu_i\geq 0$ $(i=1,2,...,d)$
and $d=\min(m,n)$ with $m$ and $n$ the dimensions of $H_A$ and $H_B$, respectively. Consider the characteristic polynomial of the reduced density matrix $\rho_A$
of $|\psi_{AB}\rangle$,
\begin{equation}
b_0\mu^m-b_1\mu^{m-1}+...+(-1)^{m-1}b_{m-1}\mu+(-1)^mb_m,
\end{equation}
where $b_0=1$,
\begin{equation}
b_k=\sum_{\{g_k\}\in G}\prod_{j\in g_k}\mu_j,~~k=1,2,...,m,
\end{equation}
with ${g_k}$ a subset of $G=\{1,2,...,m\}$ of $k$ elements.

The coefficients of the characteristic polynomial of a reduced density matrix for a bipartite pure state can be linearly expressed by the moments of the reduced density matrix \cite{jfjq},
\begin{equation}
b_{k+1}=\frac{1}{k+1}\sum_{l=0}^{m}(-1)^lb_{k-l}Tr(\rho^{l+1}_A),
\end{equation}
where $b_0=1$ and $k=0,1,...,m-1$. Hence, as the entanglement can be
usually characterized by the reduced density matrix
\cite{wkw,xmys,zwms,zwsm}, it can be also quantified by the moments
of the reduced density matrix. We define the following entanglement
measure based on moments of the reduced states (EMMRS),
\begin{eqnarray*}
E^{rm}(|\psi_{AB}\rangle)=&1-\displaystyle\sum_{i=1}^{\frac{m}{2}}[\frac{4i}{m^2+2m}Tr(\rho^i_A)\\
&+\displaystyle\frac{2m-4i+4}{m^2+2m}Tr(\rho^{i+\frac{m}{2}}_A)]
\end{eqnarray*}
for even $m$, and
\begin{eqnarray*}
E^{rm}(|\psi_{AB}\rangle)&=&1-\sum_{i=1}^{\frac{m-1}{2}}[\frac{4i}{(m+1)^2}Tr(\rho^i_A)
\nonumber\\
&+&\frac{2m-4i+2}{(m+1)^2}Tr(\rho^{\frac{m+1}{2}+i}_A)-\frac{2}{m+1}Tr(\rho^{\frac{m+1}{2}}_A)]
\end{eqnarray*}
for odd $m$.

The EMMRS for general mixed states $\rho_{AB}$ is given by
convex-roof extension,
\begin{equation}
	E^{rm}(\rho_{AB})=\mathop{\min}_{\{p_i,|\psi_i\rangle_{AB}\}}
	\sum_{i}p_iE^{rm}(|\psi_i\rangle_{AB}),
\end{equation}
where the minimization goes over all possible pure state
decompositions of
$\rho_{AB}=\sum_{i}p_i|\psi_i\rangle_{AB}\langle\psi_i|$.

Before presenting our main results, we first prove two lemmas.
\begin{lemma}
For any ensemble $\{p_i,\rho_i\}$ of a quantum state $\rho$, we have
\begin{equation}\label{l1}
Tr[(\sum_{i}p_i\rho_i)^n]\leq \sum_{i}p_iTr(\rho^n_i).
\end{equation}
\end{lemma}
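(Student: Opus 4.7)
The plan is to recognize $Tr(\rho^n)$ as the $n$-th power of a Schatten norm for positive semidefinite $\rho$, and then combine the triangle inequality for that norm with the scalar Jensen inequality applied to $x\mapsto x^n$. For any $\rho\geq 0$, one has $Tr(\rho^n)=\|\rho\|_n^n$, where $\|\cdot\|_n$ denotes the Schatten $n$-norm; this identity applies both to each $\rho_i$ and to the convex combination $\rho=\sum_i p_i\rho_i$, since convex combinations of positive semidefinite matrices remain positive semidefinite.

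First, I would invoke Minkowski's inequality for the Schatten $n$-norm,
$$\left\|\sum_i p_i \rho_i\right\|_n \leq \sum_i p_i \|\rho_i\|_n,$$
which is valid for $n\geq 1$ and bounds a nonnegative quantity by a nonnegative one. Raising both sides to the $n$-th power preserves the inequality and yields
$$Tr\!\left[\left(\sum_i p_i \rho_i\right)^{\!n}\right] \leq \left(\sum_i p_i \|\rho_i\|_n\right)^{\!n}.$$

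Second, I would apply Jensen's inequality to the convex function $x\mapsto x^n$ on $[0,\infty)$ with probability weights $\{p_i\}$, giving
$$\left(\sum_i p_i \|\rho_i\|_n\right)^{\!n} \leq \sum_i p_i \|\rho_i\|_n^n = \sum_i p_i Tr(\rho_i^n).$$
Chaining these two estimates produces exactly Eq.~(\ref{l1}). I do not foresee a substantial obstacle, since all quantities are manifestly nonnegative and the ingredients are classical. As an even shorter alternative, one could invoke the standard theorem that for every convex scalar function $f$ the map $A\mapsto Tr(f(A))$ is convex on Hermitian matrices (Klein/Peierls inequality), and apply it directly to $f(x)=x^n$ with the weights $\{p_i\}$ on the matrices $\{\rho_i\}$, obtaining the lemma in a single line.
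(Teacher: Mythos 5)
Your proof is correct and follows essentially the same route as the paper's: the triangle (Minkowski) inequality for the Schatten $n$-norm followed by convexity of $x\mapsto x^n$ on the nonnegative reals; the paper merely writes the two-term case explicitly and then invokes induction, whereas you handle the general convex combination directly. The alternative one-line argument via convexity of $A\mapsto Tr(f(A))$ is also valid, though not the route the paper takes.
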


\begin{proof}
We first prove the case for $i=2$, $\rho=p_1\rho_1+p_2\rho_2$, where $p_1+p_2=1$.
For $n\geq 2$, we have
\begin{eqnarray}
Tr(p_1\rho_1+p_2\rho_2)^n &\leq& \{[Tr(p_1\rho_1)^n]^{\frac{1}{n}}+[Tr(p_2\rho_2)^n]^{\frac{1}{n}}\}^n \nonumber\\
                          &=& \{p_1[Tr(\rho_1^n)]^{\frac{1}{n}}+p_2[Tr(\rho_2^n)]^{\frac{1}{n}}\}^n \nonumber\\
                          &\leq& p_1Tr(\rho^n_1)+p_2Tr(\rho^n_2), \nonumber
\end{eqnarray}
where the first inequality is due to the Minkowski inequality, the
second inequality is due to the convexity of the function $y=x^n$ when $x$ is non
negative. By using mathematical induction, we can obtain inequality (\ref{l1}).
\end{proof}

\begin{lemma}
For any ensemble $\{p_i,\rho_i\}$ of a quantum state $\rho$, denote
\begin{equation}
f(\rho)=1-\sum_{i=1}^{\frac{m}{2}}[\frac{4i}{m^2+2m}Tr(\rho^i)
+\frac{2m-4i+4}{m^2+2m}Tr(\rho^{i+\frac{m}{2}})]\nonumber
\end{equation}
for even $m$ and
\begin{eqnarray}
g(\rho) &=& 1- \sum_{i=1}^{\frac{m-1}{2}}[\frac{4i}{(m+1)^2}Tr(\rho^i)\nonumber\\ &+& \frac{2m-4i+2}{(m+1)^2}Tr(\rho^{\frac{m+1}{2}+i})\nonumber\\
         &-& \frac{2}{m+1}Tr(\rho^{\frac{m+1}{2}})]\nonumber
\end{eqnarray}
for odd $m$. We have
\begin{eqnarray}
f(\rho)&\geq&\sum_{i}p_if(\rho_i),\\ \label{15}
g(\rho)&\geq&\sum_{i}p_ig(\rho_i).
\end{eqnarray}
\end{lemma}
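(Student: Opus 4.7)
The plan is to reduce both inequalities to Lemma 1 applied moment by moment. The central observation is that, after collecting like powers of $\rho$, both $f$ and $g$ can be rewritten in the affine form
\begin{equation*}
f(\rho)=1-\sum_{n=1}^{m}c_n\,Tr(\rho^n),\qquad g(\rho)=1-\sum_{n=1}^{m}c'_n\,Tr(\rho^n),
\end{equation*}
where the aggregated coefficients $c_n,c'_n$ are all non-negative. First I would verify this sign property. In the even-$m$ case it is immediate: $\frac{4i}{m^2+2m}>0$, and $\frac{2m-4i+4}{m^2+2m}\ge\frac{4}{m^2+2m}>0$ for $1\le i\le m/2$. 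In the odd-$m$ case one must group the explicit $-\tfrac{2}{m+1}Tr(\rho^{(m+1)/2})$ contributions with the remaining terms that land on the same power $\rho^{(m+1)/2}$ before reading off the weight on that moment. A useful consistency check is that $\sum_n c_n=\sum_n c'_n=1$, which also guarantees that $f$ and $g$ vanish on pure reduced states as any entanglement measure should.

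Once this representation is in hand, the convexity statement follows almost mechanically. For any ensemble $\{p_i,\rho_i\}$ with $\rho=\sum_i p_i\rho_i$, Lemma 1 gives
\begin{equation*}
Tr\!\Bigl[\Bigl(\textstyle\sum_i p_i\rho_i\Bigr)^{\!n}\Bigr]\le \sum_i p_i\, Tr(\rho_i^n)
\end{equation*}
for every $n\ge 1$. Multiplying by $-c_n\le 0$ reverses the inequality, and summing over $n$ yields $-\sum_n c_n\,Tr(\rho^n)\ge -\sum_i p_i\sum_n c_n\,Tr(\rho_i^n)$. Adding $1=\sum_i p_i$ to both sides and distributing on the right then produces
\begin{equation*}
f(\rho)=1-\sum_n c_n\,Tr(\rho^n)\ \ge\ \sum_i p_i\Bigl[1-\sum_n c_n\,Tr(\rho_i^n)\Bigr]=\sum_i p_i f(\rho_i),
\end{equation*}
which is the first claim. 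An identical chain with $c'_n$ in place of $c_n$ establishes the inequality for $g$.

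The main (and essentially only) obstacle is the coefficient bookkeeping in the odd case: one has to merge the negative summand $-\tfrac{2}{m+1}Tr(\rho^{(m+1)/2})$ with the other contributions so that the resulting weight on each individual moment is manifestly non-negative, at which point the convexity of each term $-c_n\,Tr(\rho^n)$ in the mixing parameters follows directly from Lemma 1. Everything else is routine, since Lemma 1 already absorbs the hard analytic content (Minkowski's inequality together with the convexity of $x\mapsto x^n$ on $[0,\infty)$).
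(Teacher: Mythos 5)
Your proof is correct and follows essentially the same route as the paper's: both apply Lemma 1 to each trace term and use the fact that every moment enters the subtracted sum with a non-negative aggregated coefficient, so that each term $-c_n\,Tr(\rho^n)$ is individually concave under mixing. The paper only writes out the even-$m$ case and dismisses the odd case with ``similarly,'' so your explicit attention to regrouping the $Tr(\rho^{(m+1)/2})$ contributions to check the sign is, if anything, more careful than the original.
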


\begin{proof}  By definition we have
\begin{small}
\begin{eqnarray}
f(\rho) &=& f(\sum_{j}p_j\rho_j)\nonumber\\
        &=& 1-\sum_{i=1}^{\frac{m}{2}}[\frac{4i}{m^2+2m}Tr[(\sum_{j}p_j\rho_j)^i]\nonumber\\
        &+&\frac{2m-4i+4}{m^2+2m}Tr[(\sum_{j}p_j\rho_j)^{i+\frac{m}{2}}]]\nonumber\\
        &\geq& 1-\sum_{j}p_j\sum_{i=1}^{\frac{m}{2}}[\frac{4i}{m^2+2m}Tr(\rho_j)^i+\frac{2m-4i+4}{m^2+2m}Tr(\rho_j)^{i+\frac{m}{2}}]\nonumber\\
        &=&\sum_{j}p_jf(\rho_j),\nonumber
\end{eqnarray}
\end{small}
where the inequality is due to Lemma 1. Similarly, we can prove the inequality (\ref{15}).
\end{proof}

We are now ready to present a bona fide measure of quantum
entanglement. In fact, a well-defined quantum entanglement measure must satisfy the conditions \cite{vmmp,vvmbp,gvrt} as follows:

(i)\,$E(\rho)\geq 0$ for any quantum state $\rho$ and $E(\rho)=0$ if $\rho$ is separable.

(ii)\,$E$ is invariant under local unitary transformation.

(iii)\,$E$ does not increase on average under stochastic LOCC.

(iv)\,$E$ is convex.

(v)\,$E$ cannot increase under LOCC, that is, $E(\rho)\geq E(\Lambda(\rho))$ for any LOCC map $\Lambda$.

It has been proposed in \cite{wlf} that a covex function $E$ satisfies conditions (v) if and only if it satisfies conditions (ii) and (iii). $E$ is said to be an entanglement monotone\cite{gvid} if the first four conditions hold. From this point of view any entanglement monotone defined in \cite{gvid} could be regarded as a measure  of
entanglement.
\begin{theorem}
For any state $\rho_{AB}$, $E^{rm}(\rho_{AB})$ is a well-defined measure of quantum entanglement.
\end{theorem}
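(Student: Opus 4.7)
The strategy is to verify conditions (i)--(iv) directly and then to invoke the cited result of \cite{wlf} to obtain (v) for free. Both of the piecewise formulas for $E^{rm}$ on pure states can be rewritten in the uniform shape $E^{rm}(|\psi_{AB}\rangle)=1-\sum_{k=1}^{m}c_{k}\,Tr(\rho_A^k)$, so the first book-keeping step is to check that each coefficient $c_k$ is non-negative and that $\sum_{k}c_k=1$. In the even-$m$ case this follows from the pairing $(i,\,i+m/2)$ in the definition; in the odd-$m$ case the un-paired middle power $(m+1)/2$ is precisely what the stand-alone term $-\tfrac{2}{m+1}Tr(\rho_A^{(m+1)/2})$ absorbs.

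Given this normalization, condition (i) is immediate on pure states: $Tr(\rho_A^k)\le Tr(\rho_A)=1$ for every $k\ge 1$, with simultaneous equality exactly when $\rho_A$ is rank one, i.e.\ when $|\psi_{AB}\rangle$ is a product state. Passing to the convex-roof extension preserves both the inequality and the equality characterization, so $E^{rm}(\rho_{AB})\ge 0$, with equality iff $\rho_{AB}$ admits a decomposition into product states, i.e.\ iff it is separable. Condition (ii) follows at once because each $Tr(\rho_A^k)$ depends only on the spectrum of $\rho_A$, and this invariance passes to the convex roof since local unitaries merely relabel decompositions of $\rho_{AB}$. Condition (iv) is automatic from the convex-roof construction, obtained by concatenating near-optimal decompositions of two states to produce a feasible decomposition of their mixture.

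The main work is condition (iii), strong LOCC monotonicity. By the standard reduction it suffices to consider a single one-sided measurement, say by Bob, with Kraus operators $\{M_k\}$ sending $|\psi_{AB}\rangle$ to the ensemble $\{p_k,|\phi_k\rangle\}$. The key identity is $\rho_A=\sum_k p_k\rho_A^{(k)}$ with $\rho_A^{(k)}=Tr_B(|\phi_k\rangle\langle\phi_k|)$, which holds because $\sum_k M_k^\dagger M_k=I$ so Alice's marginal is unchanged on average. Lemma 2 applied to this specific ensemble gives $f(\rho_A)\ge\sum_k p_k f(\rho_A^{(k)})$, together with the analogous inequality for $g$ in the odd-$m$ case; translating moments back to the states, this is exactly $E^{rm}(|\psi_{AB}\rangle)\ge\sum_k p_k E^{rm}(|\phi_k\rangle)$. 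Measurements on Alice's side are treated symmetrically using $Tr(\rho_A^k)=Tr(\rho_B^k)$ for pure bipartite states. Vidal's convex-roof construction then lifts this pure-state strong monotonicity to arbitrary mixed states, and (v) follows from (ii), (iii) and (iv) via the cited result of \cite{wlf}.

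The step I expect to be the main obstacle is precisely this link from Lemma 2 to LOCC monotonicity: Lemma 2 is phrased as an abstract concavity inequality for an arbitrary ensemble of a state, and one must observe that the ensemble arising from a one-sided Kraus decomposition on a pure state satisfies its hypothesis exactly because of $\sum_k M_k^\dagger M_k=I$. Extending from a single local measurement to a full LOCC protocol (POVMs, multiple rounds, classical communication) is then a standard but careful reduction, and checking that the convex-roof extension of a strongly monotone pure-state function remains strongly monotone on mixed states is where Vidal's theorem plays an indispensable role.
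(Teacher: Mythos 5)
Your proposal is correct and follows essentially the same route as the paper: conditions (i), (ii) and (iv) are checked directly from the normalization of the coefficients and the convex-roof construction, and the crux of (iii) is exactly the paper's argument, namely applying Lemma~2 to the ensemble $\rho_A=\sum_k p_k\rho_A^{(k)}$ induced by a local operation on Bob's side and then lifting the pure-state inequality to mixed states through optimal decompositions. Your explicit observation that the odd-$m$ middle term must be read as a stand-alone contribution so that the coefficients sum to one, and your remark on handling Alice-side measurements via $Tr(\rho_A^k)=Tr(\rho_B^k)$, are useful clarifications of points the paper leaves implicit, but they do not change the underlying argument.
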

\begin{proof}
Firstly, we prove that if $|\psi_{AB}\rangle$ is a separable pure
state, then $E^{rm}(|\psi_{AB}\rangle)=0$. If $|\psi_{AB}\rangle$ is
a separable state, its reduced density matrix $\rho_A$ is pure. The
moment of any order of $\rho_A$ is equal to $1$, that is,
$Tr(\rho^k_A)=1$, $k=1,2,...$. Thus
$$
E^{rm}(|\psi_{AB}\rangle)=1-\sum_{i=1}^{\frac{m}{2}}\frac{4i}{m^2+2m}+\frac{2m-4i+4}{m^2+2d}=0.
$$ This equation also indicates that when the pure state $|\psi_{AB}\rangle$ is not separable, its reduced state $\rho_A$ is a mixed state, therefore
$Tr(\rho^k_A) < 1$, for $k=1,2,...$. That is
$E^{rm}(|\psi_{AB}\rangle) > 0$. For mixed state $\rho$, by definition and proof of the pure state case, $E^{rm}(\rho_{AB})\geq 0$, and if $\rho_{AB}$ is separable, $E^{rm}(\rho_{AB})=0$.

$E$ is invariant under local unitary transformations from the invariance of $Tr(\rho^i)$.

Below we prove that $E^{rm}(\rho)$ is non-increasing on
average under LOCC. Let $|\psi_{AB}\rangle$ be a bipartite pure state, and $\{\eta_i\}$ be a completely positive trace preserving map on the subsystem $B$. Then the post-mapped state is
\begin{equation}\nonumber
\sigma_i=\frac{1}{p_i}\eta_i(\sigma),
\end{equation}
where $\sigma=|\psi\rangle_{AB}\langle\psi|$ and $p_i=Tr(\eta_i\sigma)$. Let $\sigma^A_i=Tr_B(\sigma_i)$. We have
\begin{equation}\nonumber
\sigma^A=\sum_{i}p_i\sigma^A_i.
\end{equation}
Let $\{p_{ij},\sigma_{ij}\}$ be the optimal ensemble of $\sigma_i$ such that
\begin{equation}\nonumber
E^{rm}(\sigma_i)=\sum_{j}p_{ij}E^{rm}(\sigma_{ij}),
\end{equation}
where $\{\sigma_{ij}\}$ are pure states. Thus,
\begin{eqnarray}
E^{rm}(\rho) &=& f(\sigma^A)\nonumber\\
             &=& f(\sum_{i,j}p_ip_{ij}\sigma^A_{ij})\nonumber\\
             &\geq& \sum_{i,j}p_ip_{ij}f(\sigma^A_{ij})\nonumber\\
             &=& \sum_{i,j}p_ip_{ij}E^{rm}(\sigma_{ij})\nonumber\\
             &=& \sum_{i}p_iE^{rm}(\sigma_i),
\end{eqnarray}
where $\sigma^A_{ij}=Tr_B(\sigma_{ij})$ and the inequality is due to Lemma 2.

Now, for any mixed quantum state $\rho$, let $\{\varepsilon_i\}$ be a completely positive trace preserving map. Then the post-mapped state is
\begin{equation}\nonumber
	\rho_i=\frac{1}{\pi_i}\varepsilon_i(\rho),
\end{equation}
where $\pi_i=Tr(\varepsilon_i\rho)$. Suppose $\{q_j,|\psi_j\rangle\}$ be
the optimal pure-state ensemble of $\rho$. According to the equation
(17), we have
\begin{equation}\label{22}
	E^{rm}(|\psi_j\rangle)\geq \sum_{i}k_{ji}E^{rm}(\rho_{ji}),
\end{equation}
where $k_{ji}=Tr(\varepsilon_i|\psi_j\rangle\langle\psi_j|)$ and $\rho_{ji}=\frac{1}{k_{ji}}\varepsilon_i(|\psi_j\rangle\langle\psi_j|)$. Let $\{k_{jil},|\psi_{jil}\rangle\}$ be the optimal pure-state ensemble of $\rho_{ji}$ such that
$E^{rm}(\rho_{ji})=\sum_{l}k_{jil}E^{rm}(|\psi_{jil}\rangle)$. We have
\begin{eqnarray}
	E^{rm}(\rho) &=& \sum_{j}q_jE^{rm}(|\psi_j\rangle)\nonumber\\
	&\geq& \sum_{j,i}q_jk_{ji}E^{rm}(\rho_{ji})\nonumber\\
	&=& \sum_{j,i,l}q_jk_{ji}k_{jil}E^{rm}(|\psi_{jil}\rangle)\nonumber\\
	&\geq& \sum_{i}\pi_iE^{rm}(\rho_i),\nonumber
\end{eqnarray}
where the first inequality is due to (\ref{22}). The last inequality is due to that
\begin{eqnarray}
	\rho_i &=& \frac{1}{\pi_i}\varepsilon_i(\rho)\nonumber\\
	&=& \frac{1}{\pi_i}\varepsilon_i(\sum_{j}q_j|\psi_j\rangle\langle\psi_j|)\nonumber\\ \label{27}
	&=& \frac{1}{\pi_i}\sum_{j}q_j\varepsilon_i(|\psi_j\rangle\langle\psi_j|)\\
	&=& \frac{1}{\pi_i}\sum_{j}q_jk_{ji}\rho_{ji}\nonumber\\
	&=& \frac{1}{\pi_i}\sum_{j,l}q_jk_{ji}k_{jil}|\psi_{jil}\rangle\langle\psi_{jil}|,\nonumber
\end{eqnarray}
where in the equality (\ref{27}), we have used the linear property of $\varepsilon_i$.

Finally, we prove convexity. Consider $\rho=t\rho_1+(1-t)\rho_2$. Let $\rho_1=\sum_{i}p_i|\psi_i\rangle\langle\psi_i|$ and $\rho_2=\sum_{j}q_j|\phi_j\rangle\langle\phi_j|$ be the optimal pure state decomposition of $E^{rm}(\rho_1)$ and $E^{rm}(\rho_2)$, respectively. Where $\sum_{i}p_i=\sum_{j}q_j=1$ and $p_i$, $q_j>0$. We have
\begin{eqnarray}
E^{rm}(\rho)&\leq&\sum_{i}tp_iE^{rm}(|\psi_i\rangle)+\sum_{j}(1-t)q_jE^{rm}(|\phi_j\rangle)\nonumber\\
            &=&tE^{rm}(\rho_1)+(1-t)E^{rm}(\rho_2),\nonumber
\end{eqnarray}
where the inequality is due to that $\sum_{i}tp_i|\psi_i\rangle\langle\psi_i|+\sum_{j}(1-t)q_j|\phi_j\rangle\langle\phi_j|$ is also a pure state decomposition of $\rho$.
\end{proof}
To demonstrate the usefulness of EMMRS, let us consider the family of $3\times3$ quantum states
 given in Example 1. From our EMMRS we obtain
$$
E^{rm}(\rho_a)=\frac{5}{32}a^2-\frac{5}{32}a+\frac{15}{16}.
$$
The value of $E^{rm}(\rho_a)$ is always greater than 0 for $a\in[\frac{1}{50}(25-\sqrt{141}),\frac{1}{100}(25+\sqrt{141})]$, decreasing with the increase of $a$, see Figure. 3. It is worth noting that in \cite{ga}, the singlet fraction $F^{max}(\rho_a)$, which is directly related to the ability of quantum teleportation, 
also decreases with the increase of $a$. Hence, our entanglement measure also reflects
the ability of the state in quantum teleportation.
\begin{figure}[htbp]
	\centering
	\includegraphics[width=0.5\textwidth]{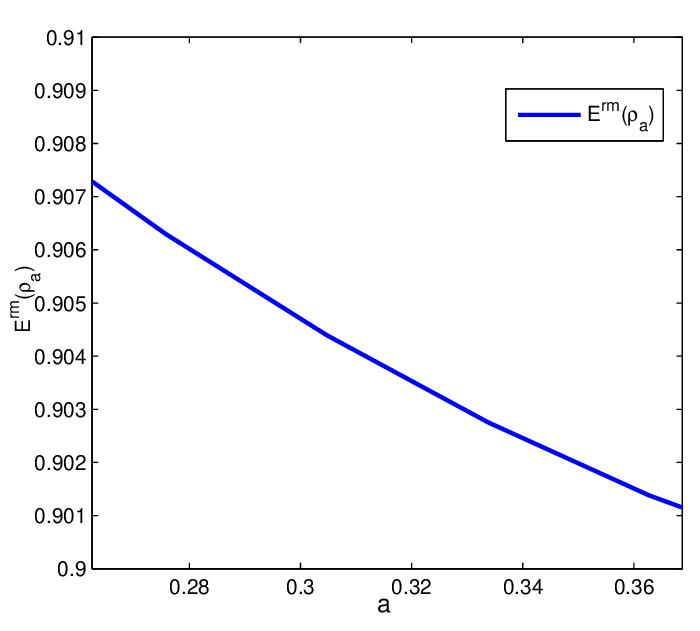}
	\vspace{-2.7em} \caption{$E^{rm}(\rho_a)>0$ for $a\in[\frac{1}{50}(25-\sqrt{141}),\frac{1}{100}(25+\sqrt{141})]$, and $E^{rm}(\rho_a)$ decreases with the increase of $a$.}
	\label{Fig.3}
\end{figure}

From the definition of EMMRS, we see that for $m=2$,
$E^{rm}(|\psi_{AB}\rangle)=\frac{1}{2}(1-Tr(\rho^2_A))=\frac{C^2(|\psi_{AB}\rangle)}{4}$,
which is just the square of concurrence up to a constant factor.
When $m$ increases our entanglement measure can traverse all the
moments of the reduced density matrix $\rho_A$, thus capturing
relatively complete information on the entanglement properties of
quantum states.

Example 5. We consider the following rank-3 states given in \cite{jfjq},
\begin{eqnarray}
|\phi_1\rangle_{AB}&=&\frac{1}{\sqrt{2}}|00\rangle
+\frac{1}{\sqrt{3}}|11\rangle+\frac{1}{\sqrt{6}}|22\rangle,\nonumber\\
|\phi_2\rangle_{AB}&=&\sqrt{\beta_1}|00\rangle+\sqrt{\beta_2}|11\rangle
+\sqrt{1-\beta_1-\beta_2}|22\rangle,\nonumber
\end{eqnarray}
where $\beta_1=\frac{1}{4}$ and $\beta_2=\frac{9+\sqrt{13}}{24}$. The concurrences of these two quantum states are equal, $C(|\phi_1\rangle_{AB})=C(|\phi_2\rangle_{AB})$. However, using our EMRM we obtain $E^{rm}(|\phi_1\rangle_{AB})=0.5139$ and $E^{rm}(|\phi_2\rangle_{AB})=0.5126$. This indicates that although both $|\phi_1\rangle_{AB}$ and $|\phi_2\rangle_{AB}$ are entangled states, the degree of entanglement is different. Our entanglement measure can characterize the entanglement in a more fine way.

\section{VI. Genuine tripartite entanglement measure based on EMMRS}

For a tripartite pure state $|\psi\rangle\in H_A\otimes H_B\otimes
H_C$, we define the genuine tripartite entanglement measure
(GTE-EMMRS) based on EMMRS,
\begin{equation}
E_{GTE}(|\psi\rangle) :=[\prod_{\gamma_i\in \Gamma}E^{rm}(|\psi\rangle_{\gamma_i})]^{\frac{1}{3}},
\end{equation}
where $\Gamma=\{\gamma_i\}$ represents the set of all possible
bipartitions of $\{A, B, C\}$, and the summation goes over all
possible bipartitions $\Gamma=\{(A|B,C), (B|A,C), (C|A,B)\}$.
Generalizing to mixed states $\rho$ via a convex roof extension, we
have
\begin{equation}\label{29}
E_{GTE}(\rho)=\mathop{\min}_{\{p_i,|\psi_i\rangle\}}\sum_{i}p_iE_{GTE}(|\psi_i\rangle),
\end{equation}
where the minimum is obtained over all possible pure state decompositions of $\rho=\sum_{i}p_i|\psi_i\rangle\langle\psi_i|$.

In the following we prove that the GTE-EMMRS is a genuine tripartite
entanglement measure.
\begin{theorem}
The GTE-EMMRS defined in Eq.(\ref{29}) is a genuine tripartite
entanglement measure of tripartite quantum systems.
\end{theorem}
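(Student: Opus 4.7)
The plan is to adapt the strategy used for $E^{rm}$ in Theorem 4: establish the five defining properties first for pure states, then lift them to mixed states via the convex-roof construction, replacing ``separable'' by ``biseparable'' in condition (i).

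For pure states, I would dispatch the easy items first. Nonnegativity of $E_{GTE}(|\psi\rangle)$ is immediate since each factor $E^{rm}(|\psi\rangle_{\gamma_i})$ is nonnegative by Theorem 4. The vanishing characterization reduces to the fact that a product of nonnegative numbers is zero iff some factor is zero, and by Theorem 4 this happens iff $|\psi\rangle$ is product across some bipartition $\gamma_i$, i.e., iff $|\psi\rangle$ is biseparable. Local unitary invariance descends from the local unitary invariance of $E^{rm}$ on each cut, since any local unitary on $ABC$ is automatically local with respect to every bipartition.

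The main technical step is pure-state LOCC monotonicity. A tripartite LOCC protocol sending $|\psi\rangle\langle\psi|$ to $\{p_i,|\psi_i\rangle\}$ is, in particular, LOCC across every bipartition $\gamma \in \Gamma$, so Theorem 4 gives $\sum_i p_i E^{rm}(|\psi_i\rangle_\gamma) \leq E^{rm}(|\psi\rangle_\gamma)$ for each $\gamma$. I would then combine the three bipartite inequalities via H\"older's inequality with exponents $(3,3,3)$, equivalently the concavity of the geometric mean on nonnegative reals,
\[
\sum_i p_i \prod_{\gamma} \bigl[E^{rm}(|\psi_i\rangle_\gamma)\bigr]^{1/3}
\leq \prod_{\gamma} \Bigl[\sum_i p_i E^{rm}(|\psi_i\rangle_\gamma)\Bigr]^{1/3}
\leq \prod_{\gamma} \bigl[E^{rm}(|\psi\rangle_\gamma)\bigr]^{1/3},
\]
which yields $\sum_i p_i E_{GTE}(|\psi_i\rangle) \leq E_{GTE}(|\psi\rangle)$.

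The extension to mixed states is routine. Convexity is automatic from the convex-roof definition. Vanishing on biseparable mixed states follows by decomposing each term of a biseparable ensemble into pure states separable across its distinguished cut, so that every pure state in the refined ensemble carries a zero factor and hence zero $E_{GTE}$. Full LOCC monotonicity for mixed states then follows from the pure-state version by the same two-step argument used at the end of the proof of Theorem 4: pick an optimal pure-state decomposition of $\rho$, apply pure-state monotonicity term by term, and use linearity of the LOCC channel to assemble a valid pure-state decomposition of each post-LOCC state $\rho_i$. The main obstacle I anticipate is verifying that the H\"older/concavity step combines in the correct direction with the upper bound supplied by the bipartite LOCC monotonicity of $E^{rm}$; once this direction is checked, the remainder is standard convex-roof machinery already exercised in the paper.
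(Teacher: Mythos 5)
Your proposal is correct and follows essentially the same route as the paper: reduce every property to the bipartite facts about $E^{rm}$ established in Theorem 4 on each of the three cuts, and lift to mixed states by the standard convex-roof machinery. The one point where you are more careful than the paper is the pure-state LOCC monotonicity step --- the paper merely asserts that the geometric mean of quantities nonincreasing on average under LOCC is itself nonincreasing on average, whereas your generalized H\"older (concavity of the geometric mean) argument is precisely the justification needed, and it does combine in the correct direction with the three bipartite inequalities.
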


\begin{proof}
The definition of $E_{GTE}(\rho)$ directly implies $E_{GTE}(\rho)=0$ for all biseparable states and $E_{GTE}(\rho)>0$ for all genuine tripartite entangled states.

Next, we prove convexity. For any mixture $\sum_{i}p_i\rho_i$, let $\{p_{ij},\rho_{ij}\}$ be the pure-state ensemble of $\rho_i$. Thus
\begin{eqnarray}
E_{GTE}(\rho)&=&E_{GTE}(\sum_{i}p_i\rho_i)\nonumber\\
             &=&E_{GTE}(\sum_{i,j}p_ip_{ij}\rho_{ij})\nonumber\\
             &\leq&\sum_{i,j}p_ip_{ij}E_{GTE}(\rho_{ij})\nonumber\\
             &=&\sum_{i}p_iE_{GTE}(\rho_i),\nonumber
\end{eqnarray}
where the inequality is due to the definition of $E_{GTE}(\rho)$.

As the EMMRS has been proven to be nonincreasing under LOCC, the
geometric mean of EMMRS for all subsystems is also nonincreasing
under LOCC. Thus $E_{GTE}(\rho)$ is nonincreasing under LOCC.
Therefore, we have completed the proof of the theorem.
\end{proof}

Example 6. Consider the following single parameter family of three-qubit state,
$$
\small{\rho_f=\frac{1}{4f^2+4}\left(\begin{array}{cccccccc}
        1 & \frac{1+f^2}{4} & \frac{f}{4} & 0 &  0 & f & 0 & 1 \\
        \frac{1+f^2}{4} & 1 & 0 & 0 &  0 & 0 & 0 & 0 \\
        \frac{f}{4} & 0 & 2f^2 & 0 & 0 & f^2 & 0 & f \\
        0 & 0 & 0 & 0 & 0 & 0 & 0 & 0 \\
        0 & 0 & 0 & 0 & 0 & 0 & 0 & 0 \\
        f & 0 & f^2 & 0 & 0 & 2f^2 & 0 & \frac{f}{4} \\
        0 & 0 & 0 & 0 & 0 & 0 & 1 & \frac{1+f^2}{4} \\
        1 & 0 & f & 0 & 0 & \frac{f}{4} & \frac{1+f^2}{4} & 1 \\
\end{array}\right),}
$$
where $f\in [0,1]$. By calculation, the GME-concurrence presented in
\cite{mcc} has the form,
	$$C_{GME}(|\psi\rangle):=\mathop{\min}_{\{\gamma_i\in \Gamma\}}\sqrt{2[1-Tr(|\psi\rangle_{\gamma_i}^2)]},$$
which is just a constant, $C_{GME}(\rho_f)=\sqrt{2[1-Tr(\rho_{f\gamma_3})]}=\frac{\sqrt{15}}{4}$ for all $f\in[0,1]$.
However, by using our GTE-EMMRS we obtain
\begin{equation}
E_{GTE}(\rho_f)=\sqrt[3]{\frac{240f^4+465f^2+240}{16384(1+f^2)^2}}.\nonumber
\end{equation}
The genuine tripartite entanglement from our measure depends on the
value of $f$. In other words, our entanglement measure GTE-EMMRS
effectively distinguishes the genuine tripartite entanglement of
this family of quantum states, see Figure. 4. In \cite{xe} the authors proposed an interesting entanglement measure called the concurrence fill, which is given by the area of a triangle composed of three one-to-other bipartite concurrences serving as side lengths:
$$F_{123}=\frac{4}{\sqrt{3}}\sqrt{P(P-C_{1(23)}^2)(P-C_{2(13)}^2)(P-C_{3(12)}^2)},$$
where $P=\frac{1}{2}(C_{1(23)}^2+C_{2(13)}^2+C_{3(12)}^2)$, $C_{i(jk)}$ denotes the concurrence under bipartition $i$ and $jk$. Calculation shows that the concurrence fill decreases with the increase of the parameter $f$. In this sense, GTE-EMMRS and concurrence fill are two inequivalent measures of tripartite entanglement, see Figure. 5.
\begin{figure}[htbp]
\centering
\includegraphics[width=0.45\textwidth]{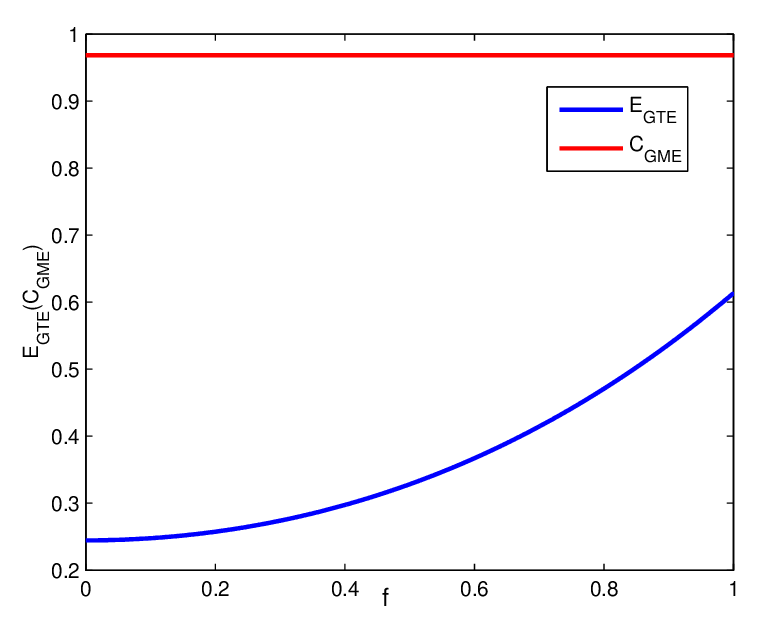}
 \caption{Our entanglement measure $E_{GTE}$ varies
with the $f$ for $f\in [0,1]$, while $C_{GME}$ remains unchanged.}
\label{Fig.4}
\end{figure}
\begin{figure}[htbp]
	\centering
	\includegraphics[width=0.45\textwidth]{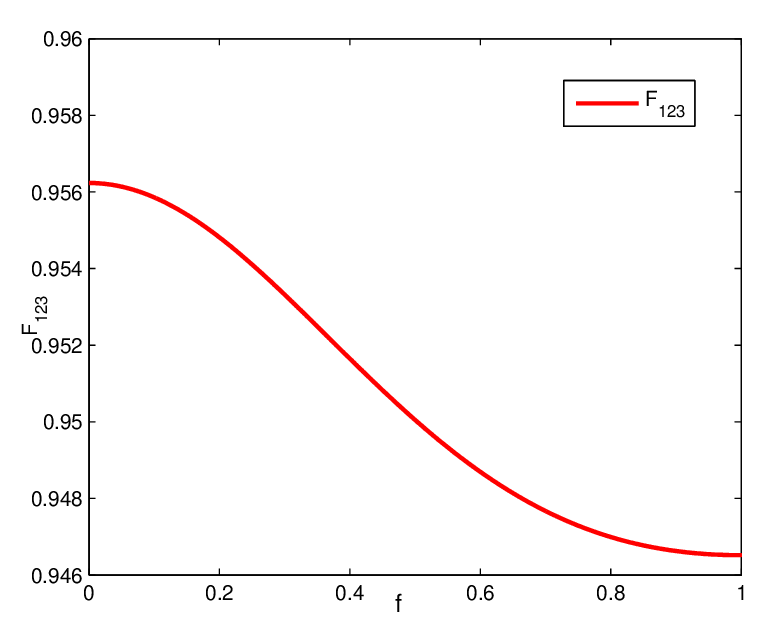}
	\vspace{-1.5em} \caption{Concurrence fill $F_{123}$ versus $f$ for $f\in [0,1]$.}
	\label{Fig.5}
\end{figure}

\section{Conclusions and discussions}

Based on the moments of the realigned matrix of a density matrix we
have proposed an experimentally plausible separability criterion for
any dimensional bipartite quantum states. The main advantage of this
criterion is that it only requires the first two realignment
moments, which simplifies the related experimental measurements. We
have also provided a separability criterion based on the
relationship between the characteristic polynomial coefficients and
the moments of a partially transposed matrix. The discriminant in
this criterion can also be represented in terms of PT moments.
Therefore, this criterion can also be experimentally implemented.
Moreover, we have presented experimentally measurable lower bounds
of concurrence for arbitrary bipartite quantum states, which give
the ways to determine quantitatively the degree of quantum
entanglement without the tomography of unknown quantum states. Based
on the moments of the reduced states, we have also obtained a bona
fide bipartite entanglement measure. Finally, we have presented a
genuine tripartite entanglement measure based on our bipartite
entanglement measure, which discriminates entanglement between
different quantum states that cannot be distinguished by
GME-concurrence.

\bigskip
{\bf Acknowledgments:} ~This work is supported by the Hainan
Provincial Natural Science Foundation of China under Grant
No.121RC539; the National Natural Science Foundation of China under Grant Nos. 12204137, 12075159 and 12171044; the specific research fund of the Innovation Platform for Academicians
of Hainan Province under Grant No. YSPTZX202215; Beijing Natural
Science Foundation (Grant No. Z190005) and the Hainan Provincial Graduate Innovation Research Program (Grant No. Qhys2023-386).

\smallskip
{\bf Data Availability Statement:} This manuscript has no associated
data.

\end{document}